\newcolumntype{?}{!{\vrule width 1pt}}
\newcolumntype{M}[1]{>{\centering\arraybackslash}m{#1}}
\newcolumntype{N}{@{}m{0pt}@{}}
\newcommand{\changeoperator}[1]{%
  \csletcs{#1@saved}{#1@}%
  \csdef{#1@}{\changed@operator{#1}}%
}
\newcommand{\changed@operator}[1]{%
  \mathop{%
    \mathchoice{\textstyle\csuse{#1@saved}}
               {\csuse{#1@saved}}
               {\csuse{#1@saved}}
               {\csuse{#1@saved}}%
  }%
}
\tikzstyle{every picture}+=[remember picture]
\tikzset{myptr1/.style={decoration={markings,mark=at position 1 with %
    {\arrow[scale=2.5]{>}}},postaction={decorate}}}
\tikzset{myptr2/.style={decoration={markings,mark=at position 1 with %
    {\arrow[scale=1.8]{>}}},postaction={decorate}}}
\newtheorem{definition}{Definition}
\newtheorem{theorem}{Theorem}
\newtheorem{proposition}{Proposition}
\newcommand\blfootnote[1]{%
  \begingroup
  \renewcommand\thefootnote{}\footnote{#1}%
  \addtocounter{footnote}{-1}%
  \endgroup
}
\title{Covert queueing problem with a Markovian statistic}
\author{\IEEEauthorblockN{Arti Yardi}
 	\IEEEauthorblockA{IIIT Bangalore}
 	\IEEEauthorblockA{arti.yardi@iiitb.ac.in}		
 	\and
 	\IEEEauthorblockN{Tejas Bodas\textsuperscript{$\dagger$}}
 	\IEEEauthorblockA{TCS Research, India}
 	\IEEEauthorblockA{tejas.bodas@tcs.com}		
}
\begin{document}

\maketitle

\begin{abstract}
Based on the covert communication framework, we consider a covert queueing problem that has a Markovian statistic. Willie jobs arrive according to a Poisson process and require service from server Bob. Bob does not have a  queue for jobs to wait and hence when the server is busy, arriving Willie jobs are lost. Willie and Bob enter a contract under which Bob should only serve Willie jobs. As part of the usage statistic, for a sequence of N consecutive jobs that arrived, Bob informs Willie whether each job was served or lost (this is the Markovian statistic). Bob is assumed to be violating the contract and admitting non-Willie (Nillie) jobs according to a Poisson process. For such a setting, we identify the hypothesis testing to be performed (given the Markovian data) by Willie to detect the presence or absence of Nillie jobs. We also characterize the upper bound on arrival rate of Nillie jobs such that the error in the hypothesis testing of Willie is arbitrarily large, ensuring covertness in admitting Nillie jobs. 
\end{abstract}

\begin{keywords}
Covert communication, Covert queueing, Detection of Markov chains
\end{keywords}

%
\section{Introduction}
\label{Section_Introduction}

%
\blfootnote{$\dagger$This work was done when the author was a faculty at IIT Dharwad.}

In the problem of \textit{covert communication}, one considers a setup where Alice is transmitting messages to Bob and intruder Willie is snooping over this communication. The aim of Willie is to determine if the communication between Alice and Bob is taking place or not~\cite{Bash_sq_root_law_2012, Bash_sq_root_journal_2013}. While this problem has been studied for a wide variety of system 
models~\cite{Towsley_poisson_channel_2015,
Towsley_Wireless_2017, Towsley_Change_Point_2017, Towsley_packet_insertion_2020,Bash16, 
Jaggi_2013, Jaggi_2014,Jaggi_channel_uncertainty_2014, Jaggi_poly_complex_2020, 
Bloch_ITW_2016, Bloch_resolvability_2016, Kadampot20, Bloch_Rayleigh_2020, Bloch_Broadcast_2019}, in this paper we consider a novel setup of \textit{covert queueing} problem first studied in \cite{Towsley_cycle_stealing_2020}.
In a covert queueing problem, there are three entities, namely Willie, Bob, and Nillie (for non-Willie). Bob is a server that processes incoming jobs that arrive to its queue. 
While Bob is obligated to serve only Willie jobs, he may  allow some non-Willie or Nillie jobs (for his selfish motives).
Willie wants to determine whether Bob is allowing such illegitimate Nillie traffic or not, while Bob wants to have as much Nillie traffic as possible without Willie being able to detect its presence, hence the name \textit{covert queueing}.
Similar to the problem of covert communication, the aim in the covert queueing problem is to determine the asymptotic limit on the arrival rate of Nillie jobs such that Willie is not able to detect the presence of Nillie traffic with high probability.

This covert queueing problem was first introduced in 
\cite{Towsley_cycle_stealing_2020} and a recent variant appeared in \cite{Comm_letter_2021}.
 In both these models, Willie jobs arrive according the a Poisson process and Willie and Bob enter into an exclusive contract under which Bob should only serve Willie jobs.
%
 As proof of exclusive work, Bob periodically informs Willie of some usage statistic. The statistic that Bob uses is typically opaque so that Willie is not able to detect Nillie jobs in an obvious manner.
For example, in \cite{Towsley_cycle_stealing_2020}, only the arrival and departure times of Willie jobs are used as the information statistic. Similarly in \cite{Comm_letter_2021}, either the length of successive $N$ busy periods or the number of jobs served in these busy periods is conveyed. Due to the opaqueness of such statistics, Willie must perform hypothesis testing to infer the presence or absence of Nillie jobs. In turn, Bob desires to 
admit Nillie jobs in such a way that the probability of error in the hypothesis testing is close to one.
The strategy for admitting Nillie jobs is dependent on the information statistic conveyed. For example, \cite{Towsley_cycle_stealing_2020} employs the probabilistic insert at the end of busy period (IEBP) strategy
while in \cite{Comm_letter_2021}, Nillie jobs are admitted according to a Poisson process with rate $\lambda_b$ and the upper bound of this rate that maintains covertness is determined. 

A common feature of both the works described above is that the sequence of information statistic conveyed by Bob are independent and identically distributed 
(i.i.d.)~random variables. In \cite{Towsley_cycle_stealing_2020}, the arrival and departure time information is used to generate a sequence of reconstructed service times that are i.i.d. Similarly, in \cite{Comm_letter_2021}, the statistic includes successive busy period information which is again i.i.d.
This i.i.d.~nature of the statistic makes the hypothesis testing problem more amenable to analysis. 

In this work, we depart from  i.i.d.~statistics and instead consider a Markovian statistic. We assume that the server Bob does not have a buffer for arriving jobs to wait and the jobs that find server to be busy are lost. (Analysis for case with buffer is for future work.) 
We assume that Willie and Nillie jobs arrive according to a Poisson process with rates $\lambda_w$ and $\lambda_b$ respectively and their service times are i.i.d.~with exponential distribution of parameter $\mu$ where $\mu > \lambda_w + \lambda_b$. As part of usage statistics, Bob provides Willie with a sequence of $N$ random variables $X_1, \ldots, X_N$ (associated with $N$ successive arrivals), where $X_j = 1$ if the $j$-th arrival finds the server to be busy and is lost and $X_j = 0$ if the server is idle. Here the $N$ jobs may include Nillie jobs as well and since the arrival or departure time information is not conveyed, it is difficult for Willie to determine presence or absence of Nillie jobs by observing $X_1, \ldots, X_N$. This statistic is Markovian and therefore the hypothesis testing as performed by Willie is based on detection of Markov chains with known parameters \cite[Ch.~12]{Levy2008}. 
As part of our main result, we formulate this 
covert queueing setup as a 
hypothesis testing problem with the Markovian statistic and identify the upper bound on $\lambda_b$ that ensures covertness. 
We first obtain a closed form expression for the error exponent associated with this hypothesis testing (see \cref{Proposition_Error_exponent}).
Using this we then show that the arrival rate $\lambda_b$ of Nillie jobs should be of order $\mathcal{O}(\sqrt{\log(K(N))/N})$, where $K(N)$ is a function of $N$ that decreases with $N$ at a sub-exponential rate (see \cref{Eqn_const_KF_KM}, \cref{Theorem_asymptotic_result}).

The rest of the paper is organized as follows.
In \cref{Section_system_model}, we first provide details of our problem setup and then recall some of the basics of hypothesis testing between two Markov chains (for the sake of completeness).
The main results of the paper are discussed in \cref{Section_results}, followed by some concluding remarks in \cref{Section_Conclusion}.

%
\section{System model}
\label{Section_system_model}

Bob is a server that Willie has contracted to serve its jobs. We assume that Willie and Nillie jobs arrive to Bob according to a Poisson process with rates $\lambda_w$ and $\lambda_b$ respectively. 
The server time for each job is exponentially distributed with rate $\mu$.
Bob does not have a queue or buffer to store arriving jobs in which case  arriving jobs are lost when the server is busy serving a previous arrival. Consider a sequence of $N$ successive arrivals and let $X_{j}$ denote the state of the server (whether it is busy or idle) as seen by the $j$-th job. More precisely, $X_{j} = 1$ if the arrival found the server to be busy in which case the job is lost. When $X_{j} = 0$, the arriving job finds the server to be idle and starts receiving service. 

As part of usage statistics, Bob provides Willie with the sequence of random variables $X_1^N \coloneqq \{X_1, \ldots, X_N\}$, where the choice of $N$ is arbitrary.
Note that Bob does not tell Willie any information about the arrival and departure time of each job.  
Since the arrivals and Poisson and the service times are exponential, the system can be represented by an $M/M/1/1$ queue and the sequence of random variables $X_1^N$ constitute a Markov chain~\cite{Queueing_systems_book}. 
Let $x_j$ be the realization of $X_{j}$ for $j = 1, 2, \ldots, N$.
Given a sequence $x_1^N$, the aim of Willie is to determine if Bob is inserting any non-Willie jobs or not using a binary hypothesis testing problem given by
\begin{equation}
\begin{aligned}
H_0&: \mbox{Bob is serving only Willie jobs}\\
H_1&: \mbox{Bob is serving Willie as well as Nillie jobs.}
\end{aligned}
\label{Eqn_hypothesis_testing}
\end{equation}

Since the underlying system is an $M/M/1/1$ queue, one can see that the value $X_j$ for the $j$-th job only depends on $X_{j-1}$ but not on any of the variables $X_k$ for $k \leq j-2$~\cite{Queueing_systems_book}. 
%
Since $X_{j}$ denotes the state of the server as seen by the $j$-th arrival and the arrival process is Poisson, $\{X_j, j \geq 0\}$ forms a two state discrete time Markov chain. $x_1^N$ denotes an N length realization or sample path of this Markov chain and the transition probability of this Markov chain is a function of the parameters $\lambda_w, \lambda_b,$ and $\mu$.
%
Let $P$ and $Q$ denote the state transition matrices under hypotheses $H_0$ and $H_1$ respectively. 

We assume that both the hypotheses are equally likely. 
Let $P_F(N)$ be the probability of rejecting $H_0$ under the condition it is true and $P_M(N)$ is the probability of accepting $H_0$ under the condition $H_1$ is true.
For the equally likely hypotheses, the total error $P_E(N)$ is equal to $P_E(N) = (P_F(N) + P_M(N))/2$~\cite{Lehmann_Book}.
In order to achieve covertness, note that Bob wishes to have $P_E(N)$ close to one. The covertness criteria is defined formally in \cref{subsection_Covertness_criterion}.

\subsection{Preliminaries about hypothesis testing between two Markov chains~\cite[Ch.~12]{Levy2008}}

We shall now summarize some basics about hypothesis testing between two discrete-time, finite-state Markov chains with state transition matrices $P$ and $Q$. 
%
%
Details can be found in \cite[Ch.~12]{Levy2008} and references therein. 
%
%
For the given sequence $x_1^N$, we wish to determine whether it corresponds to Markov chain with state transition matrix $P$ (hypothesis $H_0$) or $Q$ (hypothesis $H_1$).
Let $\mathbb{P}[x_1^N|H_j]$ be the probability of observing $x_1^N$ under the condition that hypothesis $H_j$ is true, for $j = 0,1$
and consider the log-likelihood ratio $\log L(x_1^N)$\footnote{We assume natural logarithm throughout the paper.} given by
\begin{align}
\log L(x_1^N) \coloneqq \log \frac{\mathbb{P}[x_1^N|H_0]}{\mathbb{P}[x_1^N|H_1]}.
\end{align}
The hypothesis testing consists of determining a threshold $\gamma$ such that the decision $H_0$ is true if $\log L(x_1^N) \geq \gamma$ and $H_1$ is true otherwise. This is represented by 
\begin{align}
\log L(x_1^N) \overset{H_0}{\underset{H_1}{\gtreqless}} \gamma.
\end{align}
In the asymptotic setting, when $N$ tends to infinity, one can apply the G\"{a}rtner-Ellis theorem to obtain the asymptotic values of $P_F(N)$ and $P_M(N)$ which is given by (see Eq. (12.50) of \cite{Levy2008})
\begin{equation}
\begin{aligned}
\lim_{N \rightarrow \infty}\frac{1}{N} \log P_F(N) &= -I_0(\gamma) \\ 
\lim_{N \rightarrow \infty}\frac{1}{N} \log P_M(N) &= -I_1(\gamma) = \gamma -I_0(\gamma), 
\end{aligned}
\label{Eqn_asymptotic_PF_PM}
\end{equation}
where $I_0(\gamma)$ is termed as the \textit{error exponent}. It is also called as \textit{rate of decay error} since it is the rate with which the error in the hypothesis testing goes to zero as $N$ tends to $\infty$.
%
%
Let $K_F(N)$ and $K_M(N)$ be some functions of $N$ that go to zero as $N$ tends to infinity at a sub-exponential rate, i.e., 
\begin{equation}
\begin{aligned}
\lim_{N \rightarrow \infty}\frac{1}{N} \log K_F(N) = 0 \mbox{ \& }
\lim_{N \rightarrow \infty}\frac{1}{N} \log K_M(N) = 0.
\end{aligned}
\label{Eqn_const_KF_KM}
\end{equation}
From \cref{Eqn_asymptotic_PF_PM} and \cref{Eqn_const_KF_KM}, $P_F(N)$ and $P_M(N)$ can be approximated as~(see Eq. (12.53) of \cite{Levy2008})
\begin{align}
P_F(N) & \approx K_F(N) \exp(-I_0(\gamma) N )\\
P_M(N) & \approx K_M(N) \exp(-I_1(\gamma) N ).
\end{align}
For our problem since both the hypotheses are assumed to be equally likely, we choose the threshold $\gamma$ that maximizes the rate of decay of the total error $P_E(N) = (P_F(N) + P_M(N))/2$ and this can be achieved by setting $I_0(\gamma) = I_1(\gamma)$. From \cref{Eqn_asymptotic_PF_PM}, this implies that $\gamma = 0$.
In this case, the functions $K_F(N)$ and $K_M(N)$ can also be chosen to be equal.
Suppose $K_F(N) = K_M(N) = K(N)$ and $I_0(\gamma) = I_1(\gamma) = I_{err}$ and using this $P_E(N)$ can be approximated as
\begin{align}
P_E(N) \approx K(N) \exp(-I_{err} N ).
\label{Eqn_Error_exponent_total_error}
\end{align}
Having outlined the expression for the error in the hypothesis testing, we  now define the covertness criteria for our problem. See \cite{Bash_sq_root_law_2012, Towsley_poisson_channel_2015, Jaggi_2013,Comm_letter_2021} for a similar definition.

\subsection{Covertness criterion}
\label{subsection_Covertness_criterion}
%
%
\begin{definition}
\label{Definition_ep_covertness}
We say that server Bob is able to insert Nillie jobs with $\epsilon$-covertness if the error in the hypothesis testing $P_E(N)$ satisfies $P_E(N) \geq 1 - \epsilon$ where $0 < \epsilon < 1$.
\hfill $\square$
\end{definition}
Note that while the covertness criteria can be defined for arbitrarily chosen values of $\epsilon$, typically in the literature $\epsilon$ is chosen to be close to zero.
The results provided in our work are however applicable for arbitrary choice of $\epsilon$. 
%
In this paper, we are interested in the asymptotic regime, where $N$ tends to infinity, and hence $P_E(N)$ is close to $K(N) \exp(-I_{err} N )$ (see \cref{Eqn_Error_exponent_total_error}). Thus for our problem, for $\epsilon$-covertness we need
\begin{align}
K(N) \exp(-I_{err} N ) \geq 1 - \epsilon.
\label{Eqn_covertness_criterion_asymptotic}
\end{align}

\vspace{0.03in}

%
\section{Main results}
\label{Section_results}

In order to characterize the error $P_E(N)$ in the hypothesis testing of \cref{Eqn_hypothesis_testing}, the key step is to obtain an expression for the error exponent $I_{err}$ (see \cref{Eqn_Error_exponent_total_error}).
In this section, we first provide a closed form expression for this error exponent in \cref{Proposition_Error_exponent}. Using this we then study the asymptotic performance of the hypothesis testing, satisfying the given $\epsilon$-covertness in  
\cref{subsection_asymptotic}.

\subsection{Characterizing the error exponent}
\label{subsection_err_exponent}

Recall that in our setup, the   usage statistic provided by Bob is a sequence of random variables $X_1^N$, where each $X_j$ can either be zero or one. Further, $X_1^N$ form a two state Markov chain either with state transition matrix $P$ (hypothesis $H_0$) or $Q$ (hypothesis $H_1$).
The error exponent associated with these two Markov chain hypothesis testing is characterized in the following proposition. 

\begin{proposition}
\label{Proposition_Error_exponent}
When the bufferless server Bob provides Willie with the sequence of server states for $N$ successive arrivals, the error exponent $I_{err}$ (see \cref{Eqn_Error_exponent_total_error}) for the hypothesis testing of \cref{Eqn_hypothesis_testing} is given by
\begin{align*}
I_{err} &= -\log \Big( p^v q^{1-v} + (1-p)^v (1-q)^{1-v} \Big)  \mbox{ with}\\
v &= \log \left( a \frac{\log bc}{\log (1/c)}\right)\Big/\log b,
\end{align*}
where $p = \mu/(\lambda_w + \mu),q = \mu/(\lambda_b + \lambda_w+\mu), a = (\lambda_b + \lambda_w)/\mu, b = (\lambda_b + \lambda_w)/\lambda_w$ and $c = (\lambda_w + \mu)/(\lambda_b + \lambda_w + \mu)$.
\end{proposition}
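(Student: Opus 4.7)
The plan is to identify the two-state transition matrices $P$ and $Q$ explicitly, invoke the standard Markov-chain hypothesis-testing error-exponent formula from \cite[Ch.~12]{Levy2008}, and then solve a scalar optimization whose first-order condition gives the stated $v$. First I would compute $P$ and $Q$ from the $M/M/1/1$ dynamics, using memorylessness of the exponential service and interarrival distributions to argue that the residual service time immediately after the $j$-th arrival is $\mathrm{Exp}(\mu)$ irrespective of whether $X_j = 0$ (the arrival starts service) or $X_j = 1$ (the arrival is blocked and a previously admitted job is still in service). Hence $\mathbb{P}[X_{j+1}=0 \mid X_j = i]$ reduces in both cases to the probability that the residual service completes before the next arrival, giving $p = \mu/(\lambda_w+\mu)$ under $H_0$ and $q = \mu/(\lambda_w+\lambda_b+\mu)$ under $H_1$. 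Both rows of $P$ equal $(p,1-p)$ and both rows of $Q$ equal $(q,1-q)$; in particular, $X_1^N$ is in fact i.i.d.\ under each hypothesis, which will collapse the usual spectral-radius formula.

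Next I would apply the result from \cite[Ch.~12]{Levy2008} that the asymptotic log-moment generating function of the log-likelihood ratio is $\phi(s) = \log \rho(M(s))$, where $M(s)$ is the $2\times 2$ matrix with entries $P_{ij}^{s} Q_{ij}^{1-s}$. Since $M(s)$ inherits the identical-row structure of $P$ and $Q$, its spectral radius is simply the common row sum $f(s) := p^{s} q^{1-s} + (1-p)^{s} (1-q)^{1-s}$. Combining this with the choice $\gamma = 0$ adopted in \cref{Eqn_Error_exponent_total_error}, which balances $I_0$ and $I_1$ at the Chernoff point, convex duality yields $I_{err} = -\log \min_{s \in [0,1]} f(s)$, so the proposition reduces to locating the minimizer.

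The main technical step is to solve $f'(v)=0$ in closed form. Writing $A(s) = p^{s} q^{1-s}$ and $B(s) = (1-p)^{s} (1-q)^{1-s}$, the stationarity condition rearranges to $A(v)/B(v) = \log\bigl((1-q)/(1-p)\bigr)/\log(p/q)$. Using the definitions of $a,b,c$, one checks directly that $p/q = 1/c$, $(1-p)/(1-q) = 1/(bc)$, $p/(1-p) = b/a$ and $q/(1-q) = 1/a$; the left-hand side then reduces to $b^{v}/a$ and the right-hand side to $\log(bc)/\log(1/c)$. Solving $b^{v}/a = \log(bc)/\log(1/c)$ produces the announced expression for $v$, and substituting it back into $-\log f(v)$ gives the claimed $I_{err}$.

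The bulk of the effort lies in this final algebraic inversion: the auxiliary constants $a, b, c$ are tuned precisely so that the stationarity condition admits an explicit solution, and one has to track several exponential ratios carefully to arrive at the compact form. The earlier steps are essentially routine, amounting to a verification that the bufferless $M/M/1/1$ observations are i.i.d.\ under each hypothesis and an invocation of the G\"{a}rtner--Ellis machinery as packaged for Markov chains in \cite[Ch.~12]{Levy2008}.
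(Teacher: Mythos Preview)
Your proposal is correct and follows essentially the same route as the paper: derive the identical-row transition matrices $P$ and $Q$ for the $M/M/1/1$ system, invoke the spectral-radius error-exponent formula from \cite[Ch.~12]{Levy2008}, note that the spectral radius of $M(s)$ is just the common row sum $f(s)$, and solve $f'(v)=0$ for the minimizer. Your final algebraic inversion via the identities $p/q=1/c$, $(1-p)/(1-q)=1/(bc)$, $p/(1-p)=b/a$, $q/(1-q)=1/a$ is a bit more direct than the paper's detour through auxiliary constants $A,B,C,D$, and your explicit remark that the chain is in fact i.i.d.\ under each hypothesis is a clarifying observation the paper leaves implicit.
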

\begin{proof}
We first obtain the state transition matrix $P$ under hypothesis $H_0$. 
Note that the state of the Markov chain (an arrival finds the server either busy or idle), can either be zero or one.
Suppose both $(j-1)$-th and $j$-th arrivals find the server states to be idle, i.e., $X_{j-1}=0$ and $X_j=0$.
This happens when the service time for the $(j-1)$-th job is less than the interarrival time between the two jobs.
Since the inter-arrival times and service times are  exponential random variables with parameters $\mu$ and $\lambda_w$ respectively, we have $\mathbb{P}[X_j=0|X_{j-1}=0] = \mu/(\lambda_w+\mu)$. 
The remaining state transition probabilities can be calculated in a similar fashion and matrix $P$ is given by
\begin{equation}
\begin{aligned}
P = 
\begin{bmatrix}
    \frac{\mu}{\lambda_w + \mu} & \frac{\lambda_w}{\lambda_w + \mu}\\
    \frac{\mu}{\lambda_w + \mu} & \frac{\lambda_w}{\lambda_w + \mu}
\end{bmatrix}
= 
\begin{bmatrix}
    p & 1-p\\
    p & 1-p
\end{bmatrix}.
\end{aligned}
\label{Eqn_matrix_P}
\end{equation}
Under hypothesis $H_1$, the arrival rate of the jobs is 
$\lambda_b + \lambda_w$ and hence the state transition matrix $Q$ is given by
\begin{equation}
\begin{aligned}
Q = 
\begin{bmatrix}
    \frac{\mu}{\lambda_w + \lambda_b + \mu} & \frac{\lambda_w + \lambda_b}{\lambda_w + \lambda_b + \mu}\\
    \frac{\mu}{\lambda_w + \lambda_b + \mu} & \frac{\lambda_w + \lambda_b}{\lambda_w + \lambda_b + \mu}
\end{bmatrix}
= 
\begin{bmatrix}
    q & 1-q\\
    q & 1-q
\end{bmatrix}.
\end{aligned}
\label{Eqn_matrix_Q}
\end{equation}

Corresponding to matrices $P$ and $Q$ and for a constant $u \in [0,1]$ we now define a matrix $M(u)$ as follows 
%
%
\begin{align}
M(u) = 
\begin{bmatrix}
p^uq^{1-u} & (1-p)^u (1-q)^{1-u}\\
p^uq^{1-u} & (1-p)^u (1-q)^{1-u}
\end{bmatrix}.
\end{align}
Let $r(u)$ be the spectral radius of $M(u)$. For the hypothesis testing between Markov chains with state transition matrices $P$ and $Q$, the error exponent of \cref{Eqn_Error_exponent_total_error} is 
given by~\cite[Sec.12.2.3]{Levy2008}
\begin{align}
I_{err} = - \min_{0 \leq u \leq 1} \log r(u).
\label{Eqn_err_expo_thm_Levy}
\end{align}

To find $I_{err}$ we thus need to first find the spectral radius $r(u)$ of $M(u)$. Since $p,q \geq 0$, $M(u)$ is a positive square matrix~\cite[Definition~2.1]{Schaefer1974} and from Proposition~2.4 of 
\cite{Schaefer1974}, $r(u)$ is upper and lower bounded by the maximum and minimum row sum of $M(u)$.
Observe that all rows of $M(u)$ are the same and hence $r(u) = p^u q^{1-u} + (1-p)^u (1-q)^{1-u}$. 
We now substitute the values of $p$ and $q$ in $r(u)$ to obtain
\begin{equation}
\begin{aligned}
r(u) 
%
%
&= \left( \frac{\mu}{\lambda_b + \lambda_w + \mu} \right) \left( \frac{\lambda_b + \lambda_w + \mu}{\lambda_w + \mu} \right)^{u}
+ \\
&\mbox{~~~~~~~}
\left( \frac{\lambda_b + \lambda_w}{\lambda_b + \lambda_w + \mu} \right) 
\left(
\frac{\lambda_w}{\lambda_b + \lambda_w}
\frac{\lambda_b + \lambda_w + \mu}{\lambda_w + \mu} \right)^{u}.
\\
%
\end{aligned}
\label{Eqn_ru_substi}
\end{equation}
Suppose
$A \coloneqq \mu/(\lambda_b + \lambda_w + \mu), 
B \coloneqq (\lambda_b + \lambda_w + \mu)/(\lambda_w + \mu), 
C \coloneqq (\lambda_b + \lambda_w)/(\lambda_b + \lambda_w + \mu)$ and
$D \coloneqq \lambda_w(\lambda_b + \lambda_w + \mu)/(\lambda_b + \lambda_w)(\lambda_w + \mu)$. Substituting the values of $A,B, C,$ and $D$ in \cref{Eqn_ru_substi} we get
\begin{align}
r(u) = AB^u + CD^u.
\label{Eqn_ru_ABCD}
\end{align}

To complete the proof, from \cref{Eqn_err_expo_thm_Levy} we now need to find the value of $u \in [0,1]$ that minimizes $\log r(u)$.
Let $v$ be this minimizer. Since log is concave function, $v$ is given by the solution of $\frac{\mathrm{d}}{\mathrm{d}u} r(u) = 0$ and hence using \cref{Eqn_ru_ABCD}, the minimizer $v$ should satisfy the following equation 
\begin{align}
AB^v \log B + CD^v \log D = 0 \\
\Rightarrow \left(\frac{B}{D} \right)^v = \frac{C \log (1/D)}{A \log B}\\
\Rightarrow v = \frac{\log \left( \frac{C}{A} \frac{\log(1/D)}{\log(B)} \right)}{\log (B/D)} \label{Eqn_v}.
\end{align}
The required expression of $v$ is obtained by  simplifying \cref{Eqn_v} and this completes the proof.
\end{proof}
%

\subsection{Asymptotic performance of hypothesis testing}
\label{subsection_asymptotic}

For ease of exposition and without loss of generality, we will assume for the remainder of this paper that the service rate $\mu$ is set to $1$. We now have the following main theorem that characterizes the asymptotic performance of the underlying hypothesis testing problem, while ensuring $\epsilon$-covertness.
\begin{theorem}
\label{Theorem_asymptotic_result}
%
%
Suppose as the usage statistic, server Bob provides Willie with a sequence of server states for $N$ successive arrivals. 
Then Bob can insert Nillie packets covertly if
$\lambda_b \leq \sqrt { \frac{8\lambda_w (\lambda_w + 1)^2}{N} \log \frac{K(N)}{1 - \epsilon} }$, 
where $K(N)$ is a function of $N$ that decreases with $N$ at a sub-exponential rate (see \cref{Eqn_const_KF_KM}).
Further, $\lambda_b$ should be of order $\mathcal{O}(\sqrt{\log(K(N))/N})$.
\end{theorem}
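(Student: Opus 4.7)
The plan is to substitute the closed-form expression for $I_{err}$ from \cref{Proposition_Error_exponent} into the $\epsilon$-covertness requirement \cref{Eqn_covertness_criterion_asymptotic} and then invert the resulting inequality for $\lambda_b$. Rewriting \cref{Eqn_covertness_criterion_asymptotic} as
\begin{align*}
I_{err}\,N \;\le\; \log\!\frac{K(N)}{1-\epsilon},
\end{align*}
the goal reduces to producing a tractable (asymptotic) upper bound on $I_{err}$ as a function of $\lambda_b$. Since $I_{err}\to 0$ whenever $\lambda_b\to 0$ (because $p\to q$ and the two chains coincide), the natural strategy is a Taylor expansion of $I_{err}$ around $\lambda_b=0$, keeping only the leading quadratic term.

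First, I would expand the parameters of \cref{Proposition_Error_exponent} after setting $\mu=1$: one finds $p=1/(\lambda_w+1)$, $p-q=\lambda_b/[(\lambda_w+1)(\lambda_w+\lambda_b+1)]=\lambda_b/(\lambda_w+1)^{2}+O(\lambda_b^{2})$, and, after resolving the $0/0$ indeterminacy in the formula for the minimizer, $v=\tfrac{1}{2}+O(\lambda_b)$. Because $v^{\ast}$ lies within $O(\lambda_b)$ of $\tfrac12$ and the function $r(u)$ has bounded derivatives there, replacing $v^{\ast}$ by $\tfrac12$ perturbs the value $r(v^{\ast})$ only at order $O(\lambda_b^{4})$, which is higher than the order we ultimately need.

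Second, I would invoke the standard second-order expansion of the Bhattacharyya-type quantity for two close Bernoulli distributions,
\begin{align*}
-\log\!\bigl(\sqrt{pq}+\sqrt{(1-p)(1-q)}\bigr)=\frac{(p-q)^{2}}{8\,p(1-p)}+O\!\bigl((p-q)^{3}\bigr),
\end{align*}
which one verifies directly by Taylor expanding each square root. Substituting $p(1-p)=\lambda_w/(\lambda_w+1)^{2}$ and the expression for $p-q$ obtained above yields
\begin{align*}
I_{err}=\frac{\lambda_b^{\,2}}{8\,\lambda_w(\lambda_w+1)^{2}}+O(\lambda_b^{\,3}).
\end{align*}
Imposing $I_{err}\,N\le\log[K(N)/(1-\epsilon)]$ and solving the resulting quadratic in $\lambda_b$ then delivers the bound claimed in the statement.

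The asymptotic order statement is immediate once this quadratic characterization is in hand: the sub-exponential rate condition \cref{Eqn_const_KF_KM} forces $\log K(N)$ to grow strictly slower than any linear function of $N$, while $\log(1/(1-\epsilon))$ is a fixed constant, so the right-hand side scales as $\sqrt{\log K(N)/N}$ up to multiplicative constants depending only on $\lambda_w$ and $\epsilon$. The main obstacle in a rigorous write-up is the Taylor analysis of $I_{err}$ itself: the formula for the minimizer $v$ in \cref{Proposition_Error_exponent} is of $0/0$ type as $\lambda_b\to 0$, so both the numerator $a\log(bc)/\log(1/c)$ and the denominator $\log b$ must be expanded carefully, and one must verify that the $O(\lambda_b^{3})$ remainder is controlled uniformly in a way that does not spoil the leading-order bound. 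Everything after that is routine algebra.
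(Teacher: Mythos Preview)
Your proposal is correct and follows the same high-level strategy as the paper: rewrite the covertness criterion as $I_{err}\le (1/N)\log\!\bigl(K(N)/(1-\epsilon)\bigr)$, Taylor-expand $I_{err}$ in $\lambda_b$ about $0$ to obtain $I_{err}\approx \lambda_b^{2}/\bigl(8\lambda_w(\lambda_w+1)^{2}\bigr)$, and then invert the quadratic.

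Where you differ from the paper is in the mechanics of the Taylor step. The paper differentiates $I_{err}(\lambda_b)=-\log F(\lambda_b)$ directly, tracking the dependence on $\lambda_b$ through both $q(\lambda_b)$ and the minimizer $v(\lambda_b)$; this entails computing $F,F',F''$ explicitly, carrying terms in $v'(0)$ that eventually cancel between $F_1''$ and $F_2''$, and invoking L'H\^opital for $\lim_{\lambda_b\to 0}v=1/2$. Your route is shorter and more structural: since $v$ is the \emph{minimizer} of $r(u)$ and $r''(u)=O(\lambda_b^{2})$ near $\lambda_b=0$, replacing $v$ by $1/2$ costs only $O(\lambda_b^{4})$ in $r(v)$, which lets you bypass all derivatives of $v$ and appeal directly to the standard Bhattacharyya expansion $-\log\bigl(\sqrt{pq}+\sqrt{(1-p)(1-q)}\bigr)=(p-q)^{2}/\bigl(8p(1-p)\bigr)+O((p-q)^{3})$. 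This envelope-theorem shortcut is exactly why the $v'(0)$ terms cancel in the paper's calculation, so your argument explains that cancellation rather than merely observing it. Both reach the same leading coefficient; yours is cleaner, while the paper's is more self-contained for a reader unfamiliar with the Bhattacharyya identity.
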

\begin{proof}
For $\epsilon$-covertness we need (see \cref{Eqn_covertness_criterion_asymptotic})
\begin{align}
K(N) \exp(-I_{err} N ) \geq 1 - \epsilon, 
\label{Eqn_covertness_criterion_asymptotic_recall}
\end{align}
where the expression of $I_{err}$ is provided in \cref{Proposition_Error_exponent}. We observe that for the asymptotic analysis, the expression of $I_{err}$ provided in \cref{Proposition_Error_exponent} is not amenable for further analysis.
In the literature, typically such situations are resolved with the aid of Taylor series approximation (for example see \cite{Bash_sq_root_law_2012, Towsley_poisson_channel_2015, Towsley_cycle_stealing_2020} and references therein). 
Along similar lines, we consider the second order Taylor series approximation of $I_{err}$
with respect to $\lambda_b$ around $\lambda_b = 0$. 
We denote $I_{err}$ by $I_{err}(\lambda_b)$ to indicate that it is a function of $\lambda_b$ (since $\lambda_w$ is now treated as a constant).
%
We have observed that, 
the Taylor series expansion of $I_{err}(\lambda_b)$ is very close to its true value in the asymptotic regime (when $N \rightarrow \infty$).

The remainder of this proof is now sub-divided into two parts.
We first find the Taylor series expansion of $I_{err}(\lambda_b)$ and then use it to obtain required asymptotic result for satisfying $\epsilon$-covertness criteria. 
In what follows, all the derivatives are taken with respect to $\lambda_b$ and $f^{\prime}(\lambda_b)$ denotes the derivative of a function $f(\lambda_b)$.
%

~\\
\textit{Part-I: Obtaining Taylor series expansion of $I_{err}(\lambda_b)$}

Note that the calculations towards finding the Taylor series expansion are not straightforward (not amenable via MATLAB/Mathematica) and hence we provide these steps in detail. 
Suppose $I_{err}(\lambda_b)$ is approximated via
the second order Taylor series approximation around $\lambda_b = 0$ as follows
\begin{align}
I_{err}(\lambda_b) \approx I_{err}(0) + I_{err}^{\prime}(0) \lambda_b + \frac{I_{err}^{\prime \prime}(0)}{2} \lambda_b^2,
\label{Eqn_Taylor_series_approx}
\end{align}
where $I_{err}^{\prime}(0)$ and $I_{err}^{\prime \prime}(0)$ are the first and second derivative of $I_{err}(\lambda_b)$, evaluated at $\lambda_b = 0$.
We have observed numerically that considering terms upto $\lambda_b^2$ provide a good approximation.
In the expression of $I_{err}(\lambda_b)$, observe that while $q$ and $v$ are functions of $\lambda_b$, $p$ is not. To indicate this dependence explicitly we denote $q$ and $v$ by $q(\lambda_b)$ and $v(\lambda_b)$ respectively. 
Suppose $\bar{p} = 1-p, \bar{q}(x) = 1-q(x), F_1(\lambda_b) = p^{v(\lambda_b)} 
q(\lambda_b)^{1-v(\lambda_b)}, F_2(\lambda_b) = \bar{p}^{v(\lambda_b)} 
\bar{q}(\lambda_b)^{1-v(\lambda_b)}$, and $F(\lambda_b) = F_1(\lambda_b) + F_2(\lambda_b)$. 
Using this we have $I_{err}(\lambda_b) = -\log F(\lambda_b)$.
First and second derivatives of $I_{err}(\lambda_b)$ are now given by
\begin{align}
I_{err}^{\prime}(\lambda_b) = -\frac{F^{\prime}(\lambda_b)}{F(\lambda_b)},
I_{err}^{\prime \prime}(\lambda_b) = \frac{(F^{\prime}(\lambda_b))^2 - F(\lambda_b) F^{\prime \prime}(\lambda_b)  }{(F(\lambda_b))^2}.
\label{Eqn_I_err_derivatives}
\end{align}
To obtain the Taylor series expansion, we thus need to 
evaluate $F(\lambda_b), F^{\prime}(\lambda_b)$ and $F^{\prime \prime}(\lambda_b)$ as $\lambda_b \rightarrow 0$. Towards this we first note that $\lim_{\lambda_b \rightarrow 0} v(\lambda_b) = 1/2$ (the proof for this involves repeated application of L'Hopital's rule and we skip this due to space constraints).
Further, it can be easily verified that $q(0)=p, q^{\prime}(0) = -p^2, q^{\prime \prime}(0) = 2p^3, \bar{q}(0)=1-p, \bar{q}^{\prime}(0) = p^2$, and $\bar{q}^{\prime \prime}(0) = -2p^3$. We now evaluate $F(\lambda_b), F^{\prime}(\lambda_b)$, and $F^{\prime \prime}(\lambda_b)$ as $\lambda_b \rightarrow 0$.

~\\
(1) \textit{Evaluating $\lim_{\lambda_b \rightarrow 0}F(\lambda_b)$}:
$F(\lambda_b)$ is given by
\begin{align}
F(\lambda_b) = p^{v(\lambda_b)} q(\lambda_b)^{1-v(\lambda_b)} + \bar{p}^{v(\lambda_b)} 
\bar{q}(\lambda_b)^{1-v(\lambda_b)}.
\label{Eqn_F_lambda_b}
\end{align}
For any arbitrary functions $f(x)$ and $g(x)$, it is known that $\lim_{x \rightarrow x_0} f(x)^{g(x)} = f_0^{g_0}$, where $\lim_{x \rightarrow x_0} g(x) = g_0$ and $\lim_{x \rightarrow x_0} f(x) = f_0$.
It can be seen that $\lim_{\lambda_b \rightarrow 0}q(\lambda_b) = p$ and since $\lim_{\lambda_b \rightarrow 0} v(\lambda_b) = 1/2$ in \cref{Eqn_F_lambda_b} we get
\begin{align}
\lim_{\lambda_b \rightarrow 0} F(\lambda_b) = p^{0.5} p^{(1-0.5)} + \bar{p}^{0.5} 
\bar{p}^{(1-0.5)} = 1.
\label{Eqn_F_lambda_b_limit}
\end{align}

~\\ 
(2) \textit{Evaluating $\lim_{\lambda_b \rightarrow 0}F^{\prime}(\lambda_b)$}: Let us first find $F_1^{\prime}(\lambda_b)$. With some simple calculations it can be shown that
\begin{align}
F_1^{\prime}(\lambda_b) &= 
p^{v(\lambda_b)} q(\lambda_b)^{1-v(\lambda_b)} 
\bigg[ v^{\prime}(\lambda_b) \Big(\log p - \log q(x) \Big)
\nonumber
\\
& \mbox{~~~~~~~~~~~~~~~~~~~~~~~~~~} + \Big(1-v(\lambda_b)\Big)\frac{q^{\prime}(\lambda_b)}{q(\lambda_b)} \bigg].
\nonumber
\\
&\coloneqq F_1(\lambda_b) T_1(\lambda_b),
\label{Eqn_F1_prime_lambda_b}
\end{align}
where $T_1(\lambda_b) \coloneqq v^{\prime}(\lambda_b) (\log p - \log q(x) )
+ (1-v(\lambda_b))\frac{q^{\prime}(\lambda_b)}{q(\lambda_b)}$.
Substituting $q(0) = p, q^{\prime}(0) = -p^2,$ and $\lim_{\lambda_b \rightarrow 0} v(\lambda_b) = 1/2$ in \cref{Eqn_F1_prime_lambda_b}, it can be verified that $\lim_{\lambda_b \rightarrow 0}F_1(\lambda_b) = p,\lim_{\lambda_b \rightarrow 0}T_1(\lambda_b) = -p/2$ and hence $\lim_{\lambda_b \rightarrow 0}F_1^{\prime}(\lambda_b) = -p^2/2$.
Using similar steps it can be shown that $\lim_{\lambda_b \rightarrow 0}F_2^{\prime}(\lambda_b) = p^2/2$ and hence
\begin{align}
\lim_{\lambda_b \rightarrow 0}F^{\prime}(\lambda_b) = 
\lim_{\lambda_b \rightarrow 0}F_1^{\prime}(\lambda_b) + F_2^{\prime}(\lambda_b) = \frac{-p^2}{2} + \frac{p^2}{2} = 0.
\label{Eqn_F_prime_limit}
\end{align}

~\\
(3) \textit{Evaluating $\lim_{\lambda_b \rightarrow 0}F^{\prime \prime}(\lambda_b)$}: 
Let us first find $F_1^{\prime \prime}(\lambda_b)$. From \cref{Eqn_F1_prime_lambda_b} we have,
\begin{align}
F_1^{\prime \prime}(\lambda_b) = F_1(\lambda_b)T_1^{\prime}(\lambda_b) + F_1^{\prime}(\lambda_b)T_1(\lambda_b)
\label{Eqn_F1_prime_limit}
\end{align}
where, with some calculations $T_1^{\prime}(\lambda_b)$ is obtained as
\begin{align}
T_1^{\prime}(\lambda_b) 
& = -2v^{\prime}(\lambda_b) \frac{q^{\prime}(\lambda_b)}{q(\lambda_b)}
+
v^{\prime \prime}(\lambda_b) \Big(\log p - \log q(\lambda_b) \Big) + 
\nonumber
\\
&\mbox{~~~~~~}\Big(1-v(\lambda_b) \Big)
\bigg[
\frac{q(\lambda_b) q^{\prime \prime}(\lambda_b) - (q^{\prime}(\lambda_b))^2}{q(\lambda_b)}
\bigg].
\label{Eqn_T1_prime_lambda_b}
\end{align}
Substituting $q(0) = p, q^{\prime}(0) = -p^2, q^{\prime \prime}(0) = 2p^3$ and $\lim_{\lambda_b \rightarrow 0} v(\lambda_b) = 1/2$ in \cref{Eqn_T1_prime_lambda_b} we get
\begin{align}
\lim_{\lambda_b \rightarrow 0} 
T_1^{\prime}(\lambda_b) = 2pv^{\prime}(0) + \frac{p^2}{2},
\label{Eqn_T1_limit}
\end{align}
where $v^{\prime}(0) \coloneqq \lim_{\lambda_b \rightarrow 0} v^{\prime}(\lambda_b)$. 
In the above discussion we showed that, 
$\lim_{\lambda_b \rightarrow 0}F_1(\lambda_b) = p, 
\lim_{\lambda_b \rightarrow 0}F_1^{\prime}(\lambda_b) = -p^2/2$, and $\lim_{\lambda_b \rightarrow 0} T_1(\lambda_b) = -p/2$.
Substituting these limits in \cref{Eqn_F1_prime_limit} and from \cref{Eqn_T1_limit} we get
\begin{align}
\lim_{\lambda_b \rightarrow 0} 
F_1^{\prime \prime}(\lambda_b) = 2p^2v^{\prime}(0) + \frac{p^3}{2} + \frac{p^3}{4}.
\label{Eqn_F1_primeprime_limit}
\end{align}
Using similar steps, $\lim_{\lambda_b \rightarrow 0} 
F_2^{\prime \prime}(\lambda_b)$ can be evaluated. We skip the details due to space constraints. The limit of $F_2^{\prime \prime}(\lambda_b)$ is given by
\begin{align}
\lim_{\lambda_b \rightarrow 0} 
F_2^{\prime \prime}(\lambda_b) = -2p^2v^{\prime}(0) + \frac{p^3(p-2)}{2(1-p)} + \frac{p^4}{4(1-p)}.
\label{Eqn_F2_primeprime_limit}
\end{align}
Simplifying \cref{Eqn_F1_primeprime_limit} and \cref{Eqn_F2_primeprime_limit} we get
\begin{align}
\lim_{\lambda_b \rightarrow 0} 
F^{\prime \prime}(\lambda_b) 
%
%
= \frac{-p^3}{4(1-p)}
= \frac{-1}{4\lambda_w(\lambda_w+1)^2},
\label{Eqn_F_primeprime_limit}
\end{align}
where in the last step we have substituted the value of $p = 1/(\lambda_w+1)$ (since $\mu = 1$).
Using \cref{Eqn_F_lambda_b_limit}, \cref{Eqn_F_prime_limit}, and \cref{Eqn_F_primeprime_limit} in \cref{Eqn_I_err_derivatives} we get
$I_{err}(0) = 0, I_{err}^{\prime}(0) = 0, $ and $I_{err}^{\prime \prime}(0) = 1/\big(4\lambda_w(\lambda_w+1)^2\big)$.
Substituting this in \cref{Eqn_Taylor_series_approx} we get
\begin{align}
I_{err}(\lambda_b) \approx \frac{\lambda_b^2}{8\lambda_w(\lambda_w+1)^2}.
\label{Eqn_Taylor_series_approx_final}
\end{align}
%

~\\
\textit{Part-II: Covertness criteria}

From \cref{Eqn_covertness_criterion_asymptotic}, for $\epsilon$-covertness we need
\begin{align}
I_{err} \leq \frac{1}{N} \log \frac{K(N)}{1 - \epsilon},
\label{Eqn_UB_I_Err}
\end{align}
Substituting \cref{Eqn_Taylor_series_approx_final} in \cref{Eqn_UB_I_Err}, we get
\begin{align}
\lambda_b \leq \sqrt { \frac{8\lambda_w (\lambda_w + 1)^2}{N} \log \frac{K(N)}{1 - \epsilon} }
\label{Eqn_UB_lambda_b_main}
\end{align}
and this proves the required upper bound on $\lambda_b$.
Note that in \cref{Eqn_UB_lambda_b_main}, $\lambda_w$ and $\epsilon$ are constants and $K(N)$ is a function of $N$ that decreases with $N$ at a sub-exponential rate (see \cref{Eqn_const_KF_KM}).
Thus $\lambda_b$ should be order $\mathcal{O}(\sqrt{\log(K(N))/N})$ and this completes the proof of the theorem.
\end{proof}

%

%
\section{Conclusion}
\label{Section_Conclusion}

Departing from the usual i.i.d. statistics, in this work we consider a covert queueing problem with a Markovian
statistic. We assume that Bob is a bufferless server who
informs Willie about the server states (busy or idle) as seen by N successive arrivals. We formulate this covert queueing
setup as a hypothesis testing problem between two 
Markov Chains $P$ and $Q$ and identify the upper bound
on $\lambda_b$ that ensures covertness in admitting Nillie jobs.

As part of future work, it would be interesting to extend this work for the setting where Bob has a queue for the arriving jobs to wait. In this case, the information metric that Bob could use is the sequence of queue lengths as seen by the arriving customers. Relaxing the service times from exponential to general distributions and considering a multi-server setting of the problem is also for future work.

%
\section*{Acknowledgments}
This work is supported by the DST-INSPIRE faculty program of Government of India.


\bibliographystyle{IEEEtran}
\bibliography{References_Covert_communication}

\end{document}